\useunder{\uline}{\ul}{}
\newtheorem{theorem}{Theorem}[section]
\newtheorem{prop}[theorem]{Proposition}
\begin{document}

\title{Imputation under Differential Privacy}
\author[1]{Soumojit Das}
\author[1, 2]{J\"{o}rg Drechsler}
\author[3]{Keith Merrill}
\author[1]{Shawn Merrill}
\affil[1]{The Joint Program in Survey Methodology, University of Maryland, College Park}
\affil[2]{Institute for Employment Research}
\affil[3]{Brandeis University}
\renewcommand\Affilfont{\itshape\small}

\def\GS{\operatorname{GS}}
\def\datauni{\mathcal{U}}
\renewcommand{\Pr}{\mathbb{P}}

\date{}

\maketitle

\section{Introduction}
The literature on differential privacy almost invariably assumes that the data to be analyzed are fully observed. In most practical applications this is an unrealistic assumption. A simple strategy that is commonly applied to deal with this problem is to discard all those units that are not fully observed on those attributes used in the analysis (available-case analysis). While convenient, this approach will typically be inefficient, as information that is still partially available is not used. However, the real concern is that the obtained results will be biased in most cases if the distribution of the fully observed data differs from the distribution of the entire data. More formally, available-case analysis tends to be biased if the probability for an attribute to be missing is correlated with the information contained in the data.

A popular strategy to address this problem is imputation. With imputation, missing values are replaced by estimated values given the observed data. 
While alternative strategies 
exist to properly take the missingness into account, the simplicity of the imputation approach makes it a convenient tool 
that is commonly used in many applied fields as a data preprocessing step before analyzing the data. 

If privacy is a concern, the natural question arises how the imputation step affects the guarantees of formal privacy methods such as differential privacy. 
This paper aims to initiate the research regarding the interplay between differential privacy and imputation, 
offering the following contributions:
\begin{itemize}
    \item Borrowing ideas from the notion of group privacy, we show that na\"ively running a DP mechanism on the imputed data can lead to privacy degradation.
    \item We illustrate that in the worst case, the sensitivity of the query increases linearly with the number of missing data points if privacy is only taken into account when running the query of interest.
    \item We demonstrate that for a general class of imputation strategies, the worst case bounds can be improved by ensuring privacy already at the imputation stage. 
\end{itemize}



\subsection{Related work}
The literature on imputation in the context of differential privacy is surprisingly sparse. 
Krishnan et al. \citep{krishnan2016privateclean} proposes differentially private data cleaning methods which support human-in-the-loop cleaning. The methods described enable an expert to specify rules for data cleaning, and ensure that the result of a query is differentially private, which may include the impact of the expert looking at data to generate the rules. As such, this is not comparable with our present direction/approach. \citep{clifton2022} is the only previous paper which specifically addresses the problem of imputation under differential privacy.  That work focused on a specific mechanism for differential privacy and used smooth sensitivity which requires a significant amount of work from a data curator.  Our work aims to derive more general results regarding imputation with privacy that alleviate some of those problems to aid in actual use. 

\section{Assumptions regarding the imputation scheme}

Let $D$ be a dataset, with values taken from some universe $\mathcal{U}$. Let $D_{obs}$ refer to the observed part of $D$, and $D_{mis}$ denote the subset of $D$ for which the information is missing. We let $n$ denote the number of units contained in the dataset and define $n_{mis}$ as the number of units for which at least one attribute is missing. 
By a \textbf{neighbor} of $D$ we mean a dataset $D'$ which differs from $D$ in exactly one tuple. 


We let $\iota: \mathcal{U}^n \to  \mathcal{U}^n$ 
denote the imputation scheme, the rule for assigning values to missing values. This means that 
$\forall D \in\mathcal{U}^n,\quad n_{mis}\left(\iota(D)\right) = 0.$

Throughout the paper we make the following assumptions about the imputation scheme $\iota$: 
\begin{enumerate}
    \item $\iota(D) \in \mathcal{U}^n$, that is, the imputed dataset is one which could have occurred without imputation. 
    \item The imputation scheme does not change the observed values. As a consequence, if $D$ and $D'$ are neighbors, then the Hamming distance between $\iota(D)$ and $\iota(D')$ is at most $n_{mis} + 1$. 
\end{enumerate}

\section{Addressing privacy concerns in the imputation context}

Given the two-step nature of analyzing imputed datasets--the imputation step always precedes the analysis step--there are two general approaches how privacy considerations can be taken into account. Privacy can either be addressed in both steps or only at the final stage when analyzing the data. (Only adjusting the imputation step is not an option as this would leave the originally observed values unprotected). 

\subsection{Addressing privacy only at the analysis stage}
It seems natural at first sight to only consider the privacy implications of imputation when computing the query of interest. Different queries might be affected differently and it seems natural to develop tailor made algorithms to avoid introducing noise for protection when it is not necessary. 

Unfortunately, it turns out that generally relying on such a strategy can result in high privacy costs. We illustrate this by first establishing an upper bound on the possible privacy degradation from imputation and then demonstrating that this upper bound is tight for some settings.

We can establish an upper bound by tying the missing data problem to the notion of group privacy. 
\begin{prop}[Group privacy]\label{prop:group}
If $D$ and $D'$ are datasets which differ in at most $k$ elements, and $\mathcal{M}$ is an $\varepsilon$-DP algorithm, then 
$Pr(\mathcal{M}(D) \in S) \leq Pr(\mathcal{M}(D')\in S) \times e^{k\varepsilon}.$

\end{prop}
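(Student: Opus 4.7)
The plan is to reduce this to the one-tuple $\varepsilon$-DP guarantee via a hybrid (telescoping) argument. First, I would build a sequence of intermediate datasets $D = D_0, D_1, \ldots, D_k = D'$ in $\mathcal{U}^n$ such that each consecutive pair $(D_{i-1}, D_i)$ is a neighbor pair in the sense of Section~2 (differs in exactly one tuple). Since $D$ and $D'$ agree on at least $n-k$ coordinates by hypothesis, such a chain exists: simply enumerate the coordinates on which they disagree and replace them one at a time, using the $D$-values up to step $i-1$ and the $D'$-values from step $i$ onward. If $D$ and $D'$ differ in strictly fewer than $k$ coordinates, I would pad the chain by repeating some entries so that it has exactly $k+1$ terms; this only makes the eventual inequality easier.

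Next I would apply the $\varepsilon$-DP guarantee of $\mathcal{M}$ to each adjacent pair, giving $\Pr(\mathcal{M}(D_{i-1}) \in S) \leq e^{\varepsilon} \Pr(\mathcal{M}(D_i) \in S)$ for $i = 1, \ldots, k$. Multiplying these $k$ inequalities together telescopes to
$$\Pr(\mathcal{M}(D) \in S) \;\leq\; e^{k\varepsilon}\,\Pr(\mathcal{M}(D') \in S),$$
which is the claimed bound. Equivalently, one can recast this as induction on $k$: the base case $k=0$ is immediate because $D = D'$, and the inductive step combines one application of $\varepsilon$-DP on a neighbor pair with the induction hypothesis applied to the remaining $k-1$ differing coordinates.

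There is no real obstacle here; the proof is a standard hybrid argument. The only subtle points are ensuring that each intermediate $D_i$ still lies in $\mathcal{U}^n$ (automatic, since each coordinate is swapped for a value already appearing in $D$ or $D'$) and handling the edge case where $D$ and $D'$ differ in fewer than $k$ tuples (handled by padding). The resulting factor $e^{k\varepsilon}$ is known to be tight in general, which is precisely what motivates the concern expressed in the preceding paragraphs of the paper: if $k$ scales with $n_{mis}$, then naively running an $\varepsilon$-DP mechanism on imputed data can incur a privacy cost that grows with the amount of missingness.
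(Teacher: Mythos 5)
Your proof is correct: the chain/telescoping argument with intermediate datasets, applying the $\varepsilon$-DP guarantee once per differing tuple, is the standard proof of group privacy, and your handling of the edge cases (fewer than $k$ differences, intermediate datasets remaining in $\mathcal{U}^n$) is sound. The paper itself states this proposition without proof, treating it as a known fact, but your hybrid argument is exactly the reasoning it relies on --- indeed the same chaining-plus-triangle-inequality idea appears explicitly in the paper's proof of its sensitivity bound (Proposition~\ref{Bounds}).
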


Since under our assumptions of the imputation scheme $\iota$ two neighboring databases can differ in up two $n_{mis}+1$ elements after imputation, Proposition \ref{prop:group} tells us that for any dataset $D$ and any neighbor $D'$, 
\[
Pr(\mathcal{M}(\iota(D)) \in S) \leq Pr(\mathcal{M}(\iota(D'))\in S) \times e^{(n_{mis}+1)\varepsilon}.
\]
It's worth noting, however, that the bound depends on $D$ (through the term $n_{mis}$), so the uniform bound over all possible datasets of size $n$ would be $e^{n\varepsilon}$.

We can establish a similar bound for the global sensitivity. 
Let $q$ be a query, and $\Delta(q)$ denote its global sensitivity (defined on fully observed datasets), 
\[
\Delta(q) = \max_{\mathcal{Y} \in \mathcal{U}^n: \mathcal{D}_{mis} = \{\emptyset\}} \quad  \max_{D':d(D, D') = 1, D'_{mis} = \{\emptyset\}} |q(Y) - q(Y')|.
\]
Then we have the following 
\begin{prop}\label{Bounds}
\[
\Delta(q) \leq \max_{D \in \mathcal{U}^n, n_{mis} \geq 0} \quad  \max_{D':d(D, D') = 1} |q(\iota(D)) - q(\iota(D'))|
\] and for any $D \in \mathcal{U}^n$ with $n_{mis} \geq 0$,
\[
\max_{D':d(D, D') = 1} |q(\iota(D)) - q(\iota(D'))| \leq (n_{mis} +1) \Delta(q).
\]
\end{prop}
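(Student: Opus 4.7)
The plan is to handle the two inequalities separately, with the first being a quick observation and the second proceeding via a telescoping argument along an interpolating path between $\iota(D)$ and $\iota(D')$.

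For the lower bound $\Delta(q) \leq \max_{D, D'} |q(\iota(D)) - q(\iota(D'))|$, the key observation is that the maximum on the right-hand side is taken over all $D \in \mathcal{U}^n$ with $n_{mis} \geq 0$, and in particular includes the case $n_{mis} = 0$. By assumption~2, the imputation scheme leaves observed values unchanged, so whenever $D$ is fully observed we have $\iota(D) = D$; similarly for any neighbor $D'$ that is fully observed. Thus every pair $(D, D')$ with both fully observed and $d(D, D') = 1$ contributes a term $|q(D) - q(D')|$ to the right-hand side, which means the supremum over this subfamily, namely $\Delta(q)$, is at most the full supremum. This gives the first inequality directly.

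For the upper bound, fix $D$ with $n_{mis}$ missing entries and a neighbor $D'$. By assumption~2, the Hamming distance $d(\iota(D), \iota(D'))$ is at most $n_{mis} + 1$, and by assumption~1 both $\iota(D)$ and $\iota(D')$ lie in $\mathcal{U}^n$ with no missing entries. I would construct a sequence of fully observed datasets
\[
\iota(D) = Y_0, Y_1, \ldots, Y_m = \iota(D'), \qquad m \leq n_{mis} + 1,
\]
where $Y_i$ and $Y_{i+1}$ differ in exactly one coordinate, obtained by changing the mismatched coordinates one at a time. Each consecutive pair $(Y_i, Y_{i+1})$ is then an admissible pair for the definition of global sensitivity on fully observed data, so $|q(Y_i) - q(Y_{i+1})| \leq \Delta(q)$. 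By the triangle inequality,
\[
|q(\iota(D)) - q(\iota(D'))| \leq \sum_{i=0}^{m-1} |q(Y_i) - q(Y_{i+1})| \leq m \cdot \Delta(q) \leq (n_{mis} + 1)\Delta(q),
\]
which is the desired bound.

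The main obstacle is conceptual rather than technical: one must make sure that the interpolating path stays within $\mathcal{U}^n$ with no missing entries, so that each consecutive pair is a valid neighboring pair in the sense of the definition of $\Delta(q)$ (which restricts to fully observed datasets). This follows immediately from assumption~1, which guarantees $\iota(D), \iota(D') \in \mathcal{U}^n$, and the fact that interpolating between two elements of $\mathcal{U}^n$ coordinate by coordinate keeps us inside $\mathcal{U}^n$. Once this is noted, both inequalities are short.
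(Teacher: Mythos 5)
Your proposal is correct and follows essentially the same route as the paper: the first inequality is obtained by observing that fully observed datasets (where $\iota$ acts as the identity) are a subfamily of the maximization, and the second by the same chain-of-neighbors construction between $\iota(D)$ and $\iota(D')$ with the triangle inequality, each step bounded by $\Delta(q)$ since the intermediate datasets are fully observed elements of $\mathcal{U}^n$. Your explicit remark that the interpolating path stays in $\mathcal{U}^n$ is the same point the paper attributes to its first assumption on $\iota$, just spelled out a bit more carefully.
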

\begin{proof}
The first inequality is trivial, since the fully observed data is included in the space of datasets to be considered.

For the second inequality, consider any $D$ and $D'$ which are neighboring datasets. By the second condition on the imputation scheme, we have that $\iota(D)$ and $\iota(D')$ differ in at most $n_{mis} + 1$ entries. We can think of a chain of datasets, denoted $\{D_j\}^{n_{mis} + 1}_{j=0}$ where each dataset differs from the first by exactly one change, $D_0 = \iota(D)$, and $D_{n_{mis} + 1} = \iota(D')$. By repeated application of the triangle inequality, we have 
\begin{eqnarray*}
|q(\iota(D)) - q(\iota(D'))| \leq \sum^{n_{mis}}_{i=0} |q(D_i) - q(D_{i+1})| \leq (n_{mis} + 1) \Delta(q).
\end{eqnarray*}
In the last step we have used that since each dataset $D_j \in \mathcal{U}^n$ and $n_{mis} = 0$, the change observed is no larger than the global sensitivity of $q$ (this is a consequence of the first assumption on $\iota$ above). \end{proof}

In the case of using linear regression to impute a response variable $Y$ whose values are bounded by some range $[a,b]$, it is straightforward to construct an example of a dataset $D$ and a neighbor $D'$ for which 
\[
|q(\iota(D')) - q(\iota(D))| = \frac{a + (n-1)b}{n} - a = (n-1)\frac{b-a}{n} = \left(n_{mis} + 1\right) \Delta(q),
\]
showing that the upper bound proven in Proposition \ref{Bounds} is tight, at least without further assumptions on the imputation scheme and/or query.

\subsection{Addressing privacy concerns at the imputation stage}
To reduce the impacts of imputation on privacy in the worst case, it might be helpful to already account for privacy when imputing the missing values. In this section we illustrate that this strategy can indeed reduce the bounds in the worst case if we are willing to make two additional assumptions regarding the imputation scheme. These assumptions are fulfilled by most of the imputation schemes used in practice. 

The first requirement is that the imputed values of any record $i$, $i=1,\ldots,n$, are only a function of its observed values, that is,
\begin{equation}\label{eq:imp}
D_{imp}^{(i)}\sim m(D_{obs}^{(i)},\theta),
\end{equation}
where $m$ denotes the model and $\theta$ are the model parameters.

The second assumption is that the missingness mechanism is ignorable following \citep{rubin1976}. We do not provide the technical details for brevity, but informally, one key assumption of ignorable missingness mechanisms is that any systematic difference in the probability of a unit to be missing can be fully explained by the part of the data that is still observed. 

The important practical implications of an ignorable missingness mechanism is that correct inferences regarding the full data can be obtained without the need to specify the parameters of the missingness mechanism
. Furthermore, if we partition $D=\{Y,X\}$ where $Y$ contains those attributes that are only partially observed and $X$ contains those attributes that are fully observed it holds that
\[
f(Y_{obs}|X_{obs})=f(Y|X),
\]
where the index $obs$ refers to the $n-n_{mis}$ units that are fully observed. This is especially relevant in the imputation context as it implies that the fully observed cases can be used to estimate the parameters of $f(Y|X)$ and these parameters can then be used to impute any missing values. 

Since under non-ignorable missingness mechanisms, assumptions regarding the missing-data mechanism need to be established that can never be tested based on the observed information, most imputation models are based on the assumption that ignorability holds. Under this assumption, imputation is carried out in two steps: The parameters of the imputation model are estimated using the fully observed data. The parameters are then used to impute the missing values based on Equation (\ref{eq:imp}).

This two step procedure has important implications from the privacy perspective. If the parameters for the imputation model are estimated in a privacy preserving manner, the privacy guarantees no longer depend on the number of incomplete tuples $n_{mis}$. We can formalize this with the following theorem. 
\begin{theorem}
Let $\iota_{\varepsilon}(D)$ be an imputation scheme, which imputes missing values according to the model $D_{imp}^{(i)}\sim m(D_{obs}^{(i)},\hat{\theta})$, where $\hat{\theta}$ are the model parameters estimated using any suitable $\varepsilon_1$-differentially private mechanism. Given an $\varepsilon_2$-differentially private mechanism $\mathcal{M}$, we have that $\mathcal{M}(\iota_{\varepsilon}(D))$ is $(\varepsilon_1+\varepsilon_2)$ differentially private.
\end{theorem}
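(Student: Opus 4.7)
The plan is to recognize $\mathcal{M}\circ\iota_{\varepsilon}$ as the sequential (adaptive) composition of two differentially private mechanisms: an $\varepsilon_1$-DP mechanism that releases $\hat{\theta}$ followed by an $\varepsilon_2$-DP mechanism acting on $D$ with $\hat{\theta}$ treated as a fixed parameter. Once both steps are established as DP with the stated budgets, the standard sequential composition theorem for pure DP yields the $(\varepsilon_1+\varepsilon_2)$-DP conclusion.

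First, I would note that the parameter estimation step is $\varepsilon_1$-DP on $D$: although the hypothesis states that $\hat{\theta}$ is computed from $D_{obs}$, if $D$ and $D'$ are neighbors then $D_{obs}$ and $D'_{obs}$ either coincide or are themselves neighbors, so the $\varepsilon_1$-DP guarantee transfers from $D_{obs}$ to $D$. The harder and more interesting step is to verify that for each fixed value of $\hat{\theta}$, the map $D\mapsto \mathcal{M}(\iota_{\hat{\theta}}(D))$ is $\varepsilon_2$-DP. Here assumption (\ref{eq:imp}) is the crucial ingredient: given $\hat{\theta}$, the imputed value of record $i$ depends only on $D_{obs}^{(i)}$ and independent internal randomness. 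Hence if $D$ and $D'$ are neighbors differing only in record $i$, I can couple the internal randomness used to impute $D$ and $D'$ so that $\iota_{\hat{\theta}}(D)$ and $\iota_{\hat{\theta}}(D')$ agree on every record $j\neq i$ and differ only in record $i$; that is, they are themselves neighbors. Integrating out the coupling randomness and applying the $\varepsilon_2$-DP guarantee of $\mathcal{M}$ pointwise gives, for every measurable $S$,
\[
\Pr[\mathcal{M}(\iota_{\hat{\theta}}(D))\in S]\le e^{\varepsilon_2}\,\Pr[\mathcal{M}(\iota_{\hat{\theta}}(D'))\in S].
\]

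Finally, I would invoke sequential composition. The overall mechanism first draws $\hat{\theta}$ from an $\varepsilon_1$-DP mechanism on $D$, and then, with $\hat{\theta}$ as a ``side input'', draws the output of $\mathcal{M}\circ\iota_{\hat{\theta}}$, which by the previous step is $\varepsilon_2$-DP on $D$ uniformly in $\hat{\theta}$. Adaptive composition then gives that the joint release $(\hat{\theta},\mathcal{M}(\iota_{\hat{\theta}}(D)))$ is $(\varepsilon_1+\varepsilon_2)$-DP, and the actual output $\mathcal{M}(\iota_{\varepsilon}(D))$ is a post-processing of this joint release, so the same bound applies to it.

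The main obstacle, as anticipated above, is the second step: one must argue carefully that the randomness of the imputation does not weaken the $\varepsilon_2$-DP guarantee. The cleanest way is the shared-randomness coupling described above, which reduces the claim to the defining DP inequality for $\mathcal{M}$ on a pair of genuine neighbors; without assumption (\ref{eq:imp}) this coupling would fail and $\iota_{\hat{\theta}}(D),\iota_{\hat{\theta}}(D')$ could differ in up to $n_{mis}+1$ rows, which is exactly the degradation that Proposition~\ref{Bounds} shows is otherwise unavoidable.
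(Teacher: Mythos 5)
Your proposal is correct and takes essentially the same route as the paper: adaptive (sequential) composition, as in Dwork--Roth Theorem B.1, of the $\varepsilon_1$-DP estimation of $\hat{\theta}$ with the $\varepsilon_2$-DP mechanism $\mathcal{M}$, the final output being a post-processing of that composition. The only difference is that you spell out, via the shared-randomness coupling and Equation (\ref{eq:imp}), why for each fixed $\hat{\theta}$ neighboring datasets remain neighbors after imputation, a verification the paper leaves implicit in the informal discussion following its one-line proof.
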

\begin{proof}
This statement follows from a general version of the Sequential Composition Theorem as stated in \citet[Theorem B.1]{DworkRoth}. This statement of the theorem allows the use of the output of the first mechanism, the model parameters $\hat\theta$, to inform the second mechanism. 
\end{proof}
The idea in the previous proof is that we can envision the imputation in the following way: when the survey is given out, we first invite any individuals willing to answer every question to submit their surveys. From these complete responses we learn the parameters $\theta$ for the distribution $f(Y|X;\theta)$. We then invite any survey respondent with an incomplete survey to use the model $m(D_{obs}^{(i)},\hat\theta)$ as in Equation (\ref{eq:imp}) above to fill in any blanks in their reply, and send it back to us. We can then run differentially private queries on this data complete dataset.  

\section{Experiments}

We are currently running experiments which aim at comparing the utility impacts of three different strategies to deal with missing values under DP: (i) available-case analysis (ii) addressing privacy at the analysis stage, and (iii) addressing privacy at both the imputation and analysis stages. Note that even though strategy (iii) implies strict bounds on privacy, whether this also results in higher accuracy of the query response for a given privacy budget largely depends on the accuracy of the algorithms used to obtain the model parameters $\theta$. 

In our experiments we assume for simplicity that missingness is limited to one variable and that the imputation model to be used is based on OLS regression. A literature review of previous approaches for OLS under DP identified the following candidates for estimating $\theta$: The functional mechanism \citep{zhang2012functional}, PrivGene \citep{zhang2013privgene}, another approach based on perturbing the objective function proposed by \citep{chaudhuri2011differentially}, an output perturbation approach proposed by \citep{wu2015revisiting}, and an approach based on robust statistics \citep{avella2021}. We only implemented the functional mechanism and non-differentially private imputation in our experiments so far, but hope to also implement the other approaches in the future. 

We use simulated data to have full control over the data generating process and the missingness mechanism. Specifically, we generate $X=\{X_1,X_2\}$ by independently drawing $n=10,000$ records from a uniform distribution bounded between 0 and 1. 
We generate $Y$ by drawing from $Y=X'\beta+\tau,\quad \tau\sim N(0,I_n\sigma^2),$
where the vector $\beta=\{0.5,0.5\}$ and $\sigma^2=0.1$. We use this model to ensure that the assumptions of the imputation model are satisfied, as we are only interested in assessing the effects of privacy considerations and not how a potential mis-specification of the imputation model affects the utility of the imputed data. We clip $Y$ to be within [0,1] to enforce bounds on $Y$. 
To introduce missing values, we use $Pr(M_Y=1)=X_0,$
where $M_Y$ is the missing data indicator for $Y$. 
In each iteration of the simulation we set $Y$ to missing according to the probabilities given by the model. We set $\varepsilon=1$ and split the privacy budget equally between the imputation and analysis step in strategy (iii). We assume that the query of interest is the mean of $Y$ and we use the Laplace mechanism to protect it.  
We note that this setup allows us to compare to a ground truth which helps to also measure the bias and not only the uncertainty introduced by the various approaches
.

Simulation results are shown in Figure  \ref{fig:experiments}. 
\begin{wrapfigure}{R}{0.5\linewidth}
\includegraphics[trim={0 1.2cm 0 1.2cm 0},width=\linewidth]{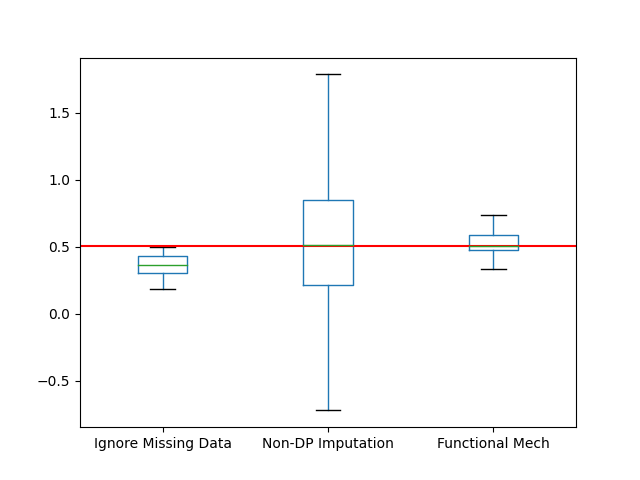}
\caption{ 
Boxplot for estimated means across simulations. Red line is true value. $\varepsilon = 1$.}
\label{fig:experiments}
\vspace{-1.5ex}
\end{wrapfigure}

The boxplots show the distribution of the estimated mean of $Y$ across 500 simulation runs for the different strategies to deal with the missing values. The red line indicates the true mean of $Y$ before introducing missing values. The results indicate that dropping the missing values introduces bias in the obtained estimates. Both imputation strategies provide unbiased results. However, the uncertainty introduced by strategy (iii) is substantially smaller than for strategy (ii).

In the future, we want to explore how changing various aspects of the simulation design (sample size, dimension of $X$, $\sigma^2$, $\varepsilon$, splitting of the budget for  strategy (iii), etc.)
affects the utility of the three approaches.

\bibliographystyle{plainnat}

\bibliography{Bibliography.bib}

\end{document}